\newcommand{\executeiffilenewer}[3]{%
\ifnum\pdfstrcmp{\pdffilemoddate{#1}}%
{\pdffilemoddate{#2}}>0%
{\immediate\write18{#3}}\fi%
}
\newcommand{%
\executeiffilenewer{.svg}{.pdf}%
{inkscape -z -D --file=.svg %
--export-pdf=.pdf --export-latex}%
\input{.pdf_tex}%
}[1]{%
\executeiffilenewer{#1.svg}{#1.pdf}%
{inkscape -z -D --file=#1.svg %
--export-pdf=#1.pdf --export-latex}%
\input{#1.pdf_tex}%
}
\theoremstyle{plain}
\newtheorem{proposition}{Proposition}
\newtheorem{corollary}{Corollary}
\newcounter{examplecount}
\newcounter{remarkcount}
\newenvironment{example}{\refstepcounter{examplecount}\begin{trivlist}\item \textbf{Example \theexamplecount.}}{\end{trivlist}}
\newcommand{\bmm}{\begin{matrix}}
\newcommand{\emm}{\end{matrix}}
\newcommand{\bpm}{\begin{pmatrix}}
\newcommand{\epm}{\end{pmatrix}}
\newcommand{\bsbm}{\left[\begin{smallmatrix}}
\newcommand{\esbm}{\end{smallmatrix}\right]}
\newcommand{\bspm}{\left(\begin{smallmatrix}}
\newcommand{\espm}{\end{smallmatrix}\right)}
\newcommand{\bbm}{\begin{bmatrix}}
\newcommand{\ebm}{\end{bmatrix}}
\newcommand{\mset}[1]{\mathcal{#1}}
\newcommand{\rv}[1]{{#1}}
\newcommand{\pmf}[1]{P_{\rv{#1}}}
\newcommand{\pmfn}[2]{P_{\rv{#1}^{#2}}}
\newcommand{\entrp}[1]{\mathbb{H}\left( #1 \right)}
\newcommand{\expect}[1]{\mathbb{E}\left[ #1 \right]}
\newcommand{\diverg}[2]{\mathbb{D}\left( #1 \Vert #2 \right)}
\newcommand{\selfinfo}[1]{\iota\left( #1 \right)}
\pgfplotsset{compat=newest}
\begin{document}

\title{Divergence-Optimal Fixed-to-Fixed Length Distribution Matching With Shell Mapping}

\author{\IEEEauthorblockN{Patrick Schulte, \IEEEmembership{Student~Member},
Fabian Steiner, \IEEEmembership{Student~Member}}\\
\thanks{P. Schulte and F. Steiner are with the Institute for Communications Engineering, Technical University of Munich (TUM). E-Mails: \{patrick.schulte, fabian.steiner\}@tum.de}
\thanks{This work was supported by the German Federal Ministry of Education and Research in the framework of an Alexander von Humboldt Professorship.}
}

\IEEEtitleabstractindextext{%
\begin{abstract}
Distribution matching (DM) transforms independent and Bernoulli(1/2) distributed bits into a sequence of output symbols with a desired distribution.
A fixed-to-fixed length, invertible DM architecture based on shell mapping is presented.
It is shown that shell mapping for distribution matching (SMDM) is the optimum DM for the informational divergence metric and that finding energy optimal sequences is a special case of divergence minimization. 
Additionally, it is shown how to find the required shell mapping weight function to  approximate arbitrary output distributions.
SMDM is combined with probabilistic amplitude shaping (PAS) to operate close to the Shannon limit.
SMDM exhibits excellent performance for short blocklengths as required by ultra-reliable low-latency (URLLC) applications.
SMDM outperforms constant composition DM (CCDM) by \SI{0.6}{dB} when used with 64-QAM at a spectral efficiency of \SI{3}{bits/channel use} and a 5G low-density parity-check code with a short blocklength of \SI{192}{bits}.
\end{abstract}
}

\maketitle

\IEEEdisplaynontitleabstractindextext
\IEEEpeerreviewmaketitle
\section{Introduction}
Higher-order modulation is a key enabler for high spectral efficiencies and various approaches have been considered in the past to close the shaping gap of discrete constellations with uniformly distributed points (e.g., with \ac{QAM})~\cite{kschischang1993optimal,forney1992trellis,Laroia1994}. Recently, \ac{PAS}~\cite{bocherer2015bandwidth} was proposed that is based on a reverse concatenation architecture~\cite{w._g._bliss_circuitry_1981}, placing the shaping operation before the \ac{FEC} encoding.  Apart from achieving most of the shaping gain, it allows flexible rate adaptation with a single constellation and \ac{FEC} code rate.
To convert uniformly distributed input bits to non-uniformly distributed output symbols, \ac{PAS} requires a \ac{DM}. In \cite{schulte2016constant}, the authors introduced \ac{CCDM} which is asymptotically optimal, in the sense of a vanishing normalized \emph{informational divergence}\cite[p.~7]{kramer2008topics}, for long output blocklengths.

For practical communication systems and new requirements such as \ac{URLLC}, shorter output blocklengths in the range of 10 to 500 symbols are desirable to minimize the processing latency and limit error propagation. Research is therefore now dedicated to find improved \ac{DM} architectures for short blocklengths, e.g., ~\cite{fehenberger2018partition,yoshida_hierarchical_2018}. Good performance for short blocklengths is also needed to operate several \acp{DM} in parallel to further reduce processing latencies.

In this letter, we introduce \ac{SMDM}, a \ac{f2f} length \ac{DM} architecture for short output blocklengths based on shell mapping.
Shell mapping was developed in the early 1990s~\cite{kschischang1992shaping, Laroia1994,khandani1993shaping} and was used in the V.34 modem standard to realize shaping gains with \ac{TCM}.
We show that SMDM minimizes the informational divergence of f2f length \ac{DM} if the self-information of the target distribution is used as the weight function for the shell mapping algorithm.
Further, we show that the dyadic and \ac{MB} distributions~\cite{kschischang1993optimal} lead to integer weight functions, which significantly simplify the implementation of \ac{SM}.
Finally, we explain  how to integrate \ac{SMDM} with \ac{PAS}
to operate close to the Shannon limit at small blocklengths. Numerical simulations with 64-\ac{QAM} and \ac{LDPC} codes from the 5G enhanced mobile broadband (eMBB) standard\cite{richardson_design_2018} show a gain of \SI{0.6}{dB} of \ac{SMDM} over \ac{CCDM} for a \ac{SE} of \SI{3.0}{bits/channel use (bpcu)}.
\section{Preliminaries} \label{sec:prelim}
\subsection{Notation}
We denote random variables with uppercase letters, and their realizations with lowercase letters.
Let $\rv{A}$ be a discrete random variable with \ac{pmf} $\pmf{A}$ defined on the set $\mset{A}$.
If an event $\rv{A} = a$ occurs with positive probability, then its \emph{self-information} is
\begin{equation}
\selfinfo{\pmf{A}(a)} = -\log_2 (\pmf{A}(a)) \text{ bits}.
\end{equation}
The \emph{entropy} of a random variable $\rv{A}$ is the expectation of the self-information of $\rv{A}$, i.e., we have
\begin{align}
\entrp{\pmf{\rv{A}}} &= \expect{\selfinfo{\pmf{A}(\rv{A})}} = \sum\limits_{a\in\supp(\pmf{A})}-\pmf{A}(a)\log_2\left(\pmf{A}(a)\right),
\end{align}
where $\supp(\pmf{A}) \subseteq \mset{A}$ is the support of $\pmf{A}$, i.e., the subset of $a$ in $\mset{A}$ with positive probability.
The informational divergence of two distributions $\pmf{\tilde A}$ and $\pmf{A}$ on $\mset{A}$ is
\begin{equation}
\diverg{\pmf{\tilde{A}}}{\pmf{A}} = \sum_{a \in \supp(\pmf{\tilde{A}})} \pmf{\tilde{A}}(a) \log_2 \frac{\pmf{\tilde{A}}(a)}{\pmf{A}(a)}.
\end{equation}
The mutual information of two random variables $\rv{A}$ and $\rv{B}$ with joint \ac{pmf} $\pmf{\rv{A}\rv{B}}$ is
\begin{equation}
    \mathbb{I}\left( \rv{A}; \rv{B}\right) = \diverg{\pmf{\rv{A}\rv{B}}}{\pmf{\rv{A}}\times\pmf{\rv{B}}},
\end{equation}
with
\begin{equation}
    (\pmf{\rv{A}}\times\pmf{\rv{B}})(ab) = \pmf{\rv{A}}(a) \cdot\pmf{\rv{B}}(a).
\end{equation}
We denote a length $n$ vector of random variables as $\rv{A}^n = [\rv{A}_1\rv{A}_2\ldots\rv{A}_n]$ with realization $a^n = [a_1a_2\ldots a_n]$.
For  random vectors with independent and identically distributed (iid) entries, we write
\begin{equation}
\pmf{A}^n(a^n) = \prod_{i=1}^n \pmf{A}(a_i).
\end{equation}

\subsection{Fixed Length Distribution Matching}
\acp{DM} have applications to capacity achieving communication~\cite{bocherer2015bandwidth} and stealth communication\cite{hou2014effective}.
In both areas, the informational divergence between the output distribution of the \ac{DM} and the target distribution plays a fundamental role.
For energy efficient communication, suppose that $\pmf{A}$ is the capacity-achieving input distribution of a channel with discrete inputs and capacity $C$. Let $\rv{\tilde{Y}}^n$ be the channel output for an input $\rv{\tilde{A}}^n$. Then we have~\cite[Eq.~(23)]{bocherer2011matching}
\begin{equation}
C - \frac{\diverg{\pmfn{\tilde{A}}{n}}{\pmf{A}^n}}{n}\leq \frac{\mathbb{I}(\rv{\tilde{A}}^n;\rv{\tilde{Y}}^n)}{n} \leq C\label{eq:motivation_divergence}.
\end{equation}
Hence, a small divergence guarantees a mutual information close to capacity.

A one-to-one f2f DM is an invertible function $f$ that realizes a desired distribution $\pmf{A}$ on the output symbols.
It maps $m$ uniformly distributed bits $\rv{B}^m$ to length $n$ sequences $\tilde{\rv{A}}^n = f(\rv{B}^m) \in \mset{A}^n$, where $\cA$ is the output alphabet.
The output distribution is defined on a block of $n$ symbols and we denote it by $\pmfn{\tilde{A}}{n}$. 
We call the ratio of input to output lengths the matcher rate 
\begin{equation}
R = \frac{m}{n}.
\end{equation}
In this work, we consider one-to-one f2f distribution matchers.
For vanishing informational divergence, we have (see~\cite{bocherer2014informational})
\begin{equation}
 R = \frac{m}{n}\leq \entrp{\pmf{A}} \label{eq:conv:result}.
\end{equation}


We refer to the image of a DM as the codebook $\mset{C}$ and its elements as codewords.
As a uniformly distributed bit sequence of length $m$ indexes the codewords in the codebook,
every code word has probability $1/|\mset{C}|=2^{-m}$.
Consequently, for the informational divergence the explicit mapping from input to output is not important, only the codebook matters.
We define the \emph{letter distribution} $\pmf{\bar{A}}$ of a codebook as
\begin{equation}
\pmf{\bar{A}}(a) = \frac{1}{|\mset{C}|}\sum_{\alpha^n \in \mset{C}}\frac{n_a(\alpha^n)}{n},\label{eq:def_letterdistrib}
\end{equation}
where $n_a(\alpha^n) = \left| \left\lbrace i: \alpha_i = a \right\rbrace \right|$ is the number of times symbol $a$ appears in $\alpha^n$.
The letter distribution corresponds to the probability of drawing a letter $a$ from the whole codebook.

As each codeword $a^n$ of a codebook $\mset{C}$ is chosen with equal probability, we can write the unnormalized informational divergence as 
\begin{align}
\diverg{\pmfn{\rv{\tilde{A}}}{n}}{\pmf{A}^n} &=-\log_2|\mset{C}| + \sum_{a^n \in \mset{C}} \frac{1}{|\mset{C}|} \selfinfo{\pmf{A}^n(a^n)} \label{eq:CB_divergence_info}\\
&= -\log_2|\mset{C}| + n\entrp{\pmf{\bar{A}}}+n\diverg{\pmf{\bar{A}}}{\pmf{A}}. \label{eq:divergence_letterdistribution}
\end{align}
\subsection{Divergence-optimal codebooks}
To operate close to channel capacity, \eqref{eq:motivation_divergence} suggests to minimize 
$\diverg{\pmfn{\rv{\tilde{A}}}{n}}{\pmf{A}^n}$, where $\pmf{\rv{A}}$ is the optimal input distribution. The optimal codebook  with fixed cardinality $M$ is
\begin{align}
	\hat{\mset{C}}_M &= \argmin_{\substack{\mset{C} \subseteq \mset{A}^n \\ \vert\mset{C}\vert = M}} \diverg{\pmfn{\tilde{A}}{n}}{\pmf{A}^{n}}
	=\argmin_{\substack{\mset{C} \subseteq \mset{A}^n \\ \vert\mset{C}\vert = M}} \sum_{a^n \in \mset{C}}  \sum_{i=1}^n \selfinfo{\pmf{A}(a_i)}\label{eq:DM_optimProblem},
\end{align}
where equality in \eqref{eq:DM_optimProblem} holds because we shifted the objective by $\log_2|\mset{C}|$ and scaled it by $|\mset{C}|$.
Problem \eqref{eq:DM_optimProblem} is solved in \cite{amjad2013algorithms} by selecting those $M$ codewords with the least self-information.
In order find the optimal codebook
\begin{align}
	\hat{\mset{C}} &= \argmin_{\mset{C} \subseteq \mset{A}^n} \label{eq:fullProblem} \diverg{\pmfn{\tilde{A}}{n}}{\pmf{A}^{n}}
	= \argmin_M \left(\argmin_{\substack{\mset{C} \subseteq \mset{A}^n \\ \vert\mset{C}\vert = M}} \diverg{P_{\mset{C}}}{\pmf{A}^{n}}\right)
\end{align}
we need to search through the solutions $\hat{\mset{C}}_M$ of \eqref{eq:DM_optimProblem} for different codebook sizes around $M \approx 2^{n\entrp{\pmf{A}}}$~\cite{amjad2013algorithms} which is not difficult.
We show next how to efficiently encode and decode to $\mset{C}_M$.
\section{Shell Mapping}\label{sec:ShellMapping}
\ac{SM} maps unsigned integers to \emph{shell sequences} $a^n \in \cA^n$, i.e.,
\begin{equation}
\hat{f}_{\text{SM}}: \lbrace 0,1,\ldots,|\mathcal{A}|^n-1 \rbrace \to \mathcal{A}^n.
\end{equation}
We restrict the input to the integers $\lbrace 0,1,\ldots,M-1 \rbrace$ and refer to the image of the shell mapper as the codebook
\begin{equation}
\mset{C}_{\text{SM},M} = \hat{f}_{\text{SM}}(\lbrace0,1,\ldots,M-1\rbrace).
\end{equation}

We assign a non-negative weight $W(a)$ to each letter $a$ in the alphabet $\mset{A}$ using $W\colon \cA \to \setN_0$.
\ac{SM} orders the sequences $a^n \in \mset{A}^n$ according to the sequence weight
$\sum_{i=1}^n W(a_i)$.

This ordering is in general not unique because two sequences may have the same weight, e.g., if they are permutations of each other. A \ac{SM} creates one of these ordered lists. Hence $\mset{C}_{\text{SM},M}$ solves the problem
\begin{equation}
\min_{\substack{\mset{C} \subseteq \mset{A}^n \\ \vert\mset{C}\vert = M}}  \sum_{a^n \in \mset{C}} \sum_{i=1}^n W(a_i),
\label{eq:SM_optimProblem}
\end{equation}
i.e., \ac{SM} finds the set $\mathcal{\hat{C}}$ of $M$ sequences $a^n$ of smallest weight $\sum_{i=1}^n W(a_i)$.
There are many ways to implement \ac{SM}, e.g., using the divide and conquer principle \cite{fischer2002precoding} or sequential encoding \cite{Laroia1994}.
In the next section we show how to use \eqref{eq:SM_optimProblem} for solving \eqref{eq:DM_optimProblem}.
\section{Shell Mapping as Distribution Matcher}\label{sec:smdm}
\subsection{SMDM Interface}
\ac{SM} algorithms require as inputs the codebook cardinality $M$, the output length $n$, and the weight function $W$.
We consider a binary input \ac{DM}, so we choose $M=2^m$ where $m$ is the input blocklength in bits. The input bits are interpreted as an unsigned integer in the range of $\lbrace 0,\ldots, 2^m-1 \rbrace$. The \ac{SM} output corresponds to the output of a \ac{DM}
\begin{equation}
f_{\text{SM},m}: \{0,1\}^m \to \mathcal{A}^n.
\end{equation}
\subsection{Divergence Optimal Weight Functions}
\begin{proposition}
A minimum divergence  f2f length \ac{DM} with a target output probability $\pmf{A}$ is a shell mapper with weight function
\begin{equation}
    \hat{W}(a) = \selfinfo{\pmf{A}(a)}
    \label{eq:weightfunctionForProbabilities}
\end{equation}
\end{proposition}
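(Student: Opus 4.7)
The plan is to argue by direct comparison of two optimization problems that are already stated in the excerpt. Starting from equation \eqref{eq:CB_divergence_info}, the divergence of a codebook of fixed cardinality $M$ decomposes as
\begin{equation*}
\diverg{\pmfn{\rv{\tilde{A}}}{n}}{\pmf{A}^n} = -\log_2 M + \sum_{a^n \in \mset{C}} \frac{1}{M}\sum_{i=1}^n \selfinfo{\pmf{A}(a_i)},
\end{equation*}
so minimizing divergence over codebooks $\mset{C}\subseteq \mset{A}^n$ with $|\mset{C}|=M$ reduces to minimizing $\sum_{a^n\in\mset{C}}\sum_{i=1}^n \selfinfo{\pmf{A}(a_i)}$, which is precisely \eqref{eq:DM_optimProblem}. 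This is the target objective that must be matched by shell mapping.

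Next I would place \eqref{eq:SM_optimProblem} next to \eqref{eq:DM_optimProblem} and observe that they have identical structure: both are sums over codewords of per-letter costs. Substituting the weight function $\hat{W}(a)=\selfinfo{\pmf{A}(a)}=-\log_2\pmf{A}(a)$ into \eqref{eq:SM_optimProblem} turns it literally into \eqref{eq:DM_optimProblem}. Since \ac{SM} is defined to return a set $\mathcal{\hat C}$ of the $M$ sequences of smallest cumulative weight, $\mset{C}_{\text{SM},M}$ is automatically a minimizer of the divergence objective at cardinality $M=2^m$. Combined with the outer search over $M$ in \eqref{eq:fullProblem}, this establishes divergence optimality among all f2f matchers.

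The main subtlety, which I would address explicitly, is that the excerpt declares $W\colon \mset{A}\to \setN_0$, whereas $\selfinfo{\pmf{A}(a)}$ is generally real-valued. The shell mapping algorithm, however, only requires a total ordering of sequences by cumulative weight, and the argument above uses only the equality of the two objective sums; no integrality of $W$ is used in the optimality statement. I would therefore note that $W$ is allowed to take real values here, while pointing forward to the later discussion showing that dyadic and \ac{MB} target distributions yield integer weights and thus allow efficient standard \ac{SM} implementations. No further calculation is needed beyond this identification, so I do not expect a hard step in the proof itself; the only risk is being sloppy about the constant shift $-\log_2 M$ and the positive scaling by $M$, which must be acknowledged to justify dropping the normalization when passing from divergence minimization to the combinatorial form in \eqref{eq:DM_optimProblem}.
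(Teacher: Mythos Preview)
Your proposal is correct and follows essentially the same route as the paper's proof: identify \eqref{eq:SM_optimProblem} with \eqref{eq:DM_optimProblem} by substituting $\hat{W}(a)=\selfinfo{\pmf{A}(a)}$, justify dropping the constant $-\log_2 M$ and the factor $1/M$, and invoke the outer search over $M$ (equivalently, over the input length) from \eqref{eq:fullProblem}. Your explicit remark about the integrality assumption on $W$ is not part of the paper's proof but is handled in exactly the same spirit in the remark immediately following it.
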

\begin{proof}
The \ac{SM} algorithm solves problem \eqref{eq:SM_optimProblem}. When we use the \emph{self-information} as a weight function, we solve problem \eqref{eq:DM_optimProblem}.
With a search over the input length, we can find the best distribution matcher independent of the codebook size.
\end{proof}

\begin{example}\label{ex:dyadic}
	Dyadic distributions have the form
	\begin{equation}
	\pmf{A}(a) = 2^{-\ell_a}
	\end{equation}
	where $\ell_a$ is a positive integer for all $a$. The weight function~\eqref{eq:weightfunctionForProbabilities} is
	\begin{equation}\label{eq:DyadicWeightFunction}
	W(a) = \ell_a \quad a \in \supp(\pmf{A}).
	\end{equation}
\end{example}

\textit{Remark:} A non-negative, integer weight function is desirable for implementation. Weight functions constructed with \eqref{eq:weightfunctionForProbabilities} do not generally have integer $\hat{W}(a)$. In the following, we show which practically relevant distributions also yield non-negative integer valued weight functions with \eqref{eq:weightfunctionForProbabilities}.
\begin{proposition}\label{prop:weightFunction}
	Consider a finite support discrete distribution that can be expressed as
	\begin{equation}\label{eq:distrib_families}
	\pmf{A}(a) = \frac{e^{-v\Omega(a)}}{\sum_{\xi \in \supp(\pmf{A})} e^{-v\Omega(\xi)}}, \quad\forall a \in \supp(\pmf{A})
	\end{equation}
	with $v$ being positive and $\Omega$ is any function
	\begin{equation}
	\Omega: \supp(\pmf{A}) \to \mathbb{N}_0,
	\end{equation}
	Then $\Omega$ is a non-negative integer weight function.
\end{proposition}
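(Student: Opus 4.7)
The plan is to reduce Proposition 2 to Proposition 1 by showing that replacing the self-information weight function $\hat W(a)=\iota(\pmf{A}(a))$ by $\Omega(a)$ leaves the shell-mapping optimization problem \eqref{eq:SM_optimProblem} invariant up to a positive affine transformation, and therefore produces the same codebook.

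First I would plug the exponential-family form \eqref{eq:distrib_families} into the divergence-optimal weight function of Proposition 1. Taking $-\log_2$ of $\pmf{A}(a)$ yields
\begin{equation}
\hat W(a) \;=\; -\log_2\pmf{A}(a) \;=\; (v\log_2 e)\,\Omega(a) \;+\; \log_2\!\sum_{\xi \in \supp(\pmf{A})} e^{-v\Omega(\xi)}.
\end{equation}
Writing $\alpha := v\log_2 e > 0$ and $\beta := \log_2\sum_{\xi} e^{-v\Omega(\xi)}$, this becomes $\hat W(a) = \alpha\,\Omega(a) + \beta$, an affine function of $\Omega(a)$ with strictly positive slope.

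Next I would substitute this into the shell-mapping objective. For any candidate codebook $\mset{C}$ of fixed cardinality $M$ and length $n$,
\begin{equation}
\sum_{a^n \in \mset{C}} \sum_{i=1}^n \hat W(a_i) \;=\; \alpha \sum_{a^n \in \mset{C}} \sum_{i=1}^n \Omega(a_i) \;+\; nM\beta.
\end{equation}
The additive term $nM\beta$ is the same for every $\mset{C}$ with $|\mset{C}|=M$, and $\alpha>0$, so the argmin over $\mset{C}$ of the left-hand side coincides with the argmin of $\sum_{a^n\in\mset{C}}\sum_{i=1}^n\Omega(a_i)$. By the definition of shell mapping in \eqref{eq:SM_optimProblem}, this means $\Omega$ and $\hat W$ induce the same shell-mapping codebook $\mset{C}_{\mathrm{SM},M}$.

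Finally, by Proposition 1 that codebook is divergence-optimal among all f2f DMs of cardinality $M$, and by hypothesis $\Omega$ takes values in $\mathbb{N}_0$, so $\Omega$ is indeed a non-negative integer weight function yielding the same divergence-optimal mapping. The argument is essentially a one-line calculation; the only subtlety worth flagging is that one must invoke the fixed-cardinality constraint $|\mset{C}|=M$ to discard the additive constant $nM\beta$, which is why the equivalence holds at the level of \eqref{eq:SM_optimProblem} and hence also at the level of \eqref{eq:DM_optimProblem}.
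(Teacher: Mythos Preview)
Your argument is correct and follows essentially the same route as the paper: plug \eqref{eq:distrib_families} into $\hat W(a)=\selfinfo{\pmf{A}(a)}$, observe that $\hat W$ is a positive-affine function of $\Omega$, and conclude that the shell-mapping codebook is unchanged. The only difference is that you spell out the invariance of \eqref{eq:SM_optimProblem} under the affine shift explicitly, whereas the paper simply asserts that ``any translation and positive scaling can be applied on the objective function without changing the codebook.''
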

\begin{proof}
	Inserting \eqref{eq:distrib_families} into \eqref{eq:weightfunctionForProbabilities} we obtain
	\begin{equation*}
	\hat{W}(a) =   v\Omega(a)\log_2(e) + \log_2 \sum_{\xi \in \supp( \pmf{A})} e^{-v\Omega(\xi)} .
	\end{equation*}
	Any translation and positive scaling can be applied on the objective function without changing the codebook. We obtain the integer weight function
	\begin{equation}
	W(a) = \Omega(a) \quad a \in \supp(\pmf{A}).
	\end{equation}
\end{proof}

\begin{example}\label{ex:boltzmann}
	The half \ac{MB} distribution is defined as
	\begin{equation}
	\pmf{A}(a) = \frac{e^{-va^2}}{\sum_{\xi \in \supp(\pmf{A})} e^{-v\xi^2}}
	\end{equation}
	with $\supp(\pmf{A}) = \lbrace 1,3,5,\ldots,2\eta-1 \rbrace =\mset{A}$ and positive $v\in\mathds{R}^+$ and $\eta\in\mathds{N}$.
	Comparing with \eqref{eq:distrib_families} we identify the weight function
	\begin{equation}\label{eq:NonsimpleMBWeightFunction}
	W(a) = a^2 \quad a \in \supp(\pmf{A}) 
	\end{equation}
	which corresponds to the energy of a constellation point.
\end{example} 
\begin{corollary}
We obtain sequences of least power by minimizing divergence to \ac{MB} distributions.
\end{corollary}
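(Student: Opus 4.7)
The plan is to chain the two preceding results. By Proposition~1, a divergence--minimizing f2f DM for target $\pmf{A}$ is a shell mapper whose weight function is the self-information $\selfinfo{\pmf{A}(a)}$. By Proposition~2 (specialized in Example~\ref{ex:boltzmann} to the half MB distribution), that self-information reduces, up to an additive constant and a positive scaling, to the integer weight $W(a)=a^2$, which is precisely the energy of the constellation point $a$.

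Next I would observe that the shell mapper solves~\eqref{eq:SM_optimProblem}, i.e.\ it selects the codebook $\mset{C}_{\text{SM},M}$ of $M$ sequences that minimizes the total weight $\sum_{a^n\in\mset{C}}\sum_{i=1}^{n}W(a_i)$. Substituting $W(a)=a^2$ turns the objective into $\sum_{a^n\in\mset{C}}\sum_{i=1}^{n}a_i^2$, the sum of sequence energies. Since all $M$ codewords are equiprobable under a one-to-one DM, dividing by $M$ gives the average energy per codeword, so the minimizer is exactly the codebook of smallest average power.

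Finally, I would note that the additive/positive-scaling transformation used to pass from $\selfinfo{\pmf{A}(a)}$ to $a^2$ leaves the ordering induced by the weight function invariant, which is the only property the shell mapper uses; therefore the divergence-optimal codebook for the MB target and the energy-optimal codebook coincide. The conclusion then follows: minimizing $\diverg{\pmfn{\tilde A}{n}}{\pmf{A}^n}$ against an MB target is equivalent to minimizing the average transmit power over all f2f DMs of the given rate.

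The only subtle point is justifying that the translation by $\log_2\!\sum_{\xi} e^{-v\Omega(\xi)}$ and the scaling by $v\log_2 e$ in the proof of Proposition~\ref{prop:weightFunction} genuinely preserve the optimizer of~\eqref{eq:SM_optimProblem}; this is immediate because the objective is a sum of $nM$ weight evaluations, so an affine change of the per-letter weight only shifts and rescales the objective and leaves the $\argmin$ unchanged. No further computation is needed.
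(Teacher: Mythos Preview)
Your argument is correct and follows exactly the route the paper takes: the corollary is stated immediately after Example~\ref{ex:boltzmann} without a separate proof, relying on precisely the chain Proposition~1 $\to$ Proposition~\ref{prop:weightFunction} $\to$ $W(a)=a^2$ that you spell out. Your added justification that the affine change of weight preserves the $\argmin$ in~\eqref{eq:SM_optimProblem} is sound and merely makes explicit what the paper leaves implicit.
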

This result has a special beauty. \ac{MB} distributions are close to optimal for maximizing the single letter mutual information on discrete signal points for the \ac{AWGN} channel~\cite[Table~1]{bocherer2015bandwidth}. If we minimize the informational divergence of our \ac{f2f} length \ac{DM} to a memoryless source with \ac{MB} distribution, we find that sequences of least energy accomplish this goal. In \cite{kschischang1993optimal}, the authors show that minimizing the average energy subject to a entropy constraint induces \ac{MB} distributed symbols.

The weight function \eqref{eq:NonsimpleMBWeightFunction} is independent of the parameter $v$.
Consequently, according to \eqref{eq:DM_optimProblem} a shell mapper with this weight function implements a minimum divergence DM with fixed codebook size $2^m$ for \emph{all} half MB distributions and
any rate adaptation does not require to change the weight function. A similar property can be observed for distribution families defined in \eqref{eq:distrib_families} for a fixed function $\Omega$ and varying $v$.
\subsection{Determining Letter Frequency $\pmf{\bar{A}}$}\label{sec:letterFrequencies}
A soft-input soft-output decoder requires the letter distribution~\eqref{eq:def_letterdistrib} in order to calculate the priors on the constellation symbols~\cite[Sec. VI-B]{bocherer2015bandwidth}. The letter distribution
depends on both the weight function and how to order sequences of equal weight. Fischer suggests in \cite{fischer1999calculation} an algorithm to calculate the letter distribution.
The algorithm uses the \emph{partial histogram}, i.e., the letter distribution for codebooks that consist of \emph{all} sequences up to a certain weight.
We denote the partial histogram for sequences $a^n$ up to weight $w$ (i.e., $W(a^n) \leq w$) by $\pmf{\bar{A}}(\cdot,w)$, where the first parameter is the symbol that we want to evaluate.
We suggest to use $\pmf{\bar{A}}(\cdot,{w}_\text{max})$ and $\pmf{\bar{A}}(\cdot,{w}_\text{max}-1)$ as approximations of the true frequencies, where $w_{\text{max}}$ is the maximum weight of sequences that the respective SMDM can generate, i.e.,
\begin{equation}
    w_{\text{max}} = W(f_{\text{SM},m}([1,1,\ldots,1])).
\end{equation}
Consider that the all 1 sequence is mapped to a sequence of highest weight. For long blocks we may use the target distribution as approximation at the receiver.
\section{CCDM and SMDM Comparison}
\subsection{Divergence}
\begin{figure}[t]
	\centering
	\footnotesize
	\includegraphics{./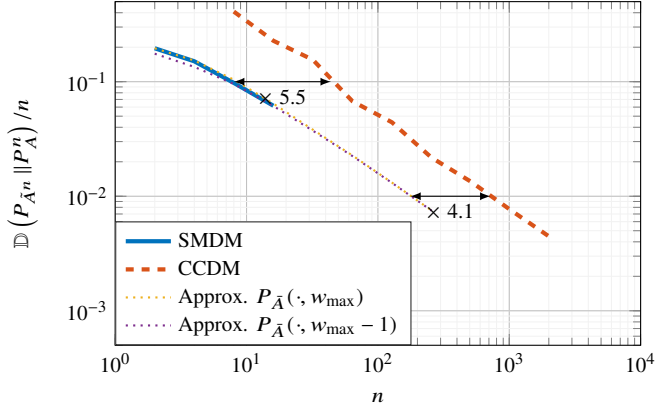}
	\caption{CCDM and SMDM comparison based on the normalized informational divergence and different output blocklengths $n$. The target distribution is 4-ary \ac{MB} and the rate is $R = 1.25$ input bits per output symbol.The calculation of the exact output distribution $P_{\tilde A^n}$ is limited by an input length of $m=  \SI{64}{bits}$ on a standard 64-bit computer architecture.}
	\label{fig:DivergenceComparisonCCDMSMDM}
\end{figure}
To compare CCDM and SMDM we consider the output alphabet $\mathcal{A} = \lbrace 1,3,5,7 \rbrace$ and rate $R = 1.25$. The distribution of the CCDM is an $n$-type approximation \cite[Sec. IV]{bocherer2015bandwidth} of a half MB distribution.
The SMDM has rate $1.25$ and uses the weight function defined in \eqref{eq:NonsimpleMBWeightFunction}.
The results are shown in Fig.~\ref{fig:DivergenceComparisonCCDMSMDM}.
The approximations (dotted lines) use the partial histograms $\pmf{\bar{\rv{A}}}(\cdot, {w}_\text{max}-1)$ and $\pmf{\bar{\rv{A}}}(\cdot, {w}_\text{max})$ as approximations for the letter distribution $\pmf{\bar{\rv{A}}}$ in \eqref{eq:divergence_letterdistribution}.
Using \ac{SMDM} and a target divergence of 0.1 bit we save approximately a factor of $5.5$ in blocklength as compared to \ac{CCDM}, and at a target divergence of $0.01$ we save a factor of 4.1. 
\vspace{-\baselineskip}
\subsection{Rate Adaptation}
Rate adaptation for SMDM is straightforward for distributions of the form~\eqref{eq:distrib_families}. The number of bits that are interpreted as the index of the ordered list can be easily adapted, and therefore the rate can be easily adapted. The granularity of rate adaption is $1/n$, where $n$ is the output length. This granularity is the best possible.%
\vspace{-\baselineskip}
\subsection{Coded Results}
\begin{figure}
	\centering
	\footnotesize
	\includegraphics{./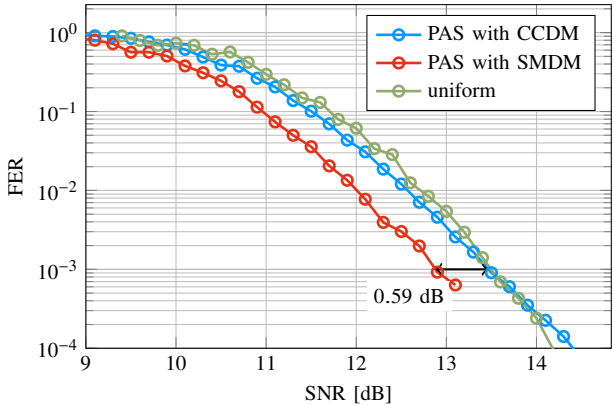}
	\caption{Finite length performance for uniform and shaped signaling using CCDM and SMDM. We target an \ac{SE} of $\SI{3}{bpcu}$ with 5G LDPC codes of blocklength \num{192}.}
	\label{fig:coded_results}
\end{figure}

We compare the performance of \ac{SMDM} and \ac{CCDM} for \ac{PAS} in a coded scenario. We target an \ac{SE} of $\SI{3}{\bpcu}$ with a 64-\ac{QAM} constellation. We employ \ac{LDPC} codes from the recent 5G eMBB standard~\cite{richardson_design_2018} with blocklength \SI{192}{bits}, i.e., 32 complex channel uses. The uniform reference curve uses a rate $R_\tc = 1/2$ code, whereas the shaped scenarios use a rate $R_\tc = 3/4$ code.
Both \acp{DM} approaches have a 4-ary output alphabet to generate the shaped amplitude sequences for the real and imaginary part.
Note that a 64-\ac{QAM} constellation can be constructed as the Cartesian product of two (bipolar) 8-\ac{ASK} constellations, where the latter has four different amplitude values. For 32 complex channel uses with \ac{QAM} symbols, we therefore need 64 amplitudes. The target distribution in both cases is the \ac{MB} family. 
\ac{CCDM} operates with an output blocklength of $n = 64$ output symbols and its performance (blue curve) is similar to the uniform reference (green curve) in Fig.~\ref{fig:coded_results}. The constant composition constraint of \ac{CCDM} thus causes a significant rate loss~\cite[Sec.~V-B]{bocherer2015bandwidth} for small output blocklengths. In contrast, \ac{SMDM} operates with an output blocklength of $n = 32$ (i.e., two \ac{SMDM} are used in parallel) but gains \SI{0.59}{dB} in power efficiency at a frame error rate of $\num{e-3}$.
\section{Conclusion}
We introduced an informational divergence optimal \ac{f2f} length \ac{DM} approach based on \ac{SM}, which shows superior performance compared to state of the art \acp{DM} for short blocklengths. We showed that the self-information of the target output distribution can be used as the weight function for the \ac{SM} algorithm to synthesize arbitrary output distributions. Furthermore, we showed that energy efficient signaling is a special case of divergence minimization. We gave examples for distributions that result in non-negative, integer valued \ac{SM} weight functions favorable for practical implementations.
\section*{Acknowlegement}
The authors would like to thank Gerhard Kramer and Georg B\"ocherer for fruitful discussions.

\ifCLASSOPTIONcaptionsoff
  \newpage
\fi


\begin{thebibliography}{10}
	\providecommand{\url}[1]{#1}
	\csname url@samestyle\endcsname
	\providecommand{\newblock}{\relax}
	\providecommand{\bibinfo}[2]{#2}
	\providecommand{\BIBentrySTDinterwordspacing}{\spaceskip=0pt\relax}
	\providecommand{\BIBentryALTinterwordstretchfactor}{4}
	\providecommand{\BIBentryALTinterwordspacing}{\spaceskip=\fontdimen2\font plus
		\BIBentryALTinterwordstretchfactor\fontdimen3\font minus
		\fontdimen4\font\relax}
	\providecommand{\BIBforeignlanguage}[2]{{%
			\expandafter\ifx\csname l@#1\endcsname\relax
			\typeout{** WARNING: IEEEtran.bst: No hyphenation pattern has been}%
			\typeout{** loaded for the language `#1'. Using the pattern for}%
			\typeout{** the default language instead.}%
			\else
			\language=\csname l@#1\endcsname
			\fi
			#2}}
	\providecommand{\BIBdecl}{\relax}
	\BIBdecl
	
	\bibitem{kschischang1993optimal}
	F.~R. Kschischang and S.~Pasupathy, ``Optimal nonuniform signaling for
	{G}aussian channels,'' \emph{{IEEE} Trans. Inf. Theory}, vol.~39, no.~3, pp.
	913--929, 1993.
	
	\bibitem{forney1992trellis}
	J.~Forney, G.~D., ``Trellis shaping,'' \emph{{IEEE} Trans. Inf. Theory},
	vol.~38, no.~2, pp. 281--300, 1992.
	
	\bibitem{Laroia1994}
	R.~Laroia, N.~Farvardin, and S.~A. Tretter, ``On optimal shaping of
	multidimensional constellations,'' \emph{{IEEE} Trans. Inf. Theory}, vol.~40,
	no.~4, pp. 1044--1056, 1994.
	
	\bibitem{bocherer2015bandwidth}
	G.~B{\"o}cherer, F.~Steiner, and P.~Schulte, ``Bandwidth {{Efficient}} and
	{{Rate}}-{{Matched Low}}-{{Density Parity}}-{{Check Coded Modulation}},''
	\emph{{IEEE} Trans. Commun.}, vol.~63, no.~12, pp. 4651--4665, Dec. 2015.
	
	\bibitem{w._g._bliss_circuitry_1981}
	{W. G. Bliss}, ``Circuitry for performing error correction calculations on
	baseband encoded data to eliminate error propagation,'' \emph{IBM Tech.
		Discl. Bull.}, vol.~23, pp. 4633--4634, 1981, iBM Tech. Discl. Bull. 23.
	
	\bibitem{schulte2016constant}
	P.~Schulte and G.~B{\"o}cherer, ``Constant composition distribution matching,''
	\emph{{IEEE} Trans. Inf. Theory}, vol.~62, no.~1, pp. 430--434, Jan 2016.
	
	\bibitem{kramer2008topics}
	G.~Kramer, ``Topics in multi-user information theory,'' \emph{Foundations and
		Trends{\textregistered} in Communications and Information Theory}, vol.~4,
	no. 4–5, pp. 265--444, 2008.
	
	\bibitem{fehenberger2018partition}
	\BIBentryALTinterwordspacing
	T.~Fehenberger, D.~S. Millar, T.~Koike-Akino, K.~Kojima, and K.~Parsons,
	``Partition-based distribution matching,'' \emph{arXiv preprint}, 2018.
	[Online]. Available: \url{http://arxiv.org/abs/1801.08445}
	\BIBentrySTDinterwordspacing
	
	\bibitem{yoshida_hierarchical_2018}
	T.~Yoshida, M.~Karlsson, and E.~Agrell, ``Hierarchical {{Distribution
			Matching}} for {{Probabilistically Shaped Coded Modulation}},''
	\emph{arXiv:1809.01653}, Sep. 2018.
	
	\bibitem{kschischang1992shaping}
	F.~R. Kschischang, ``Shaping and coding gain criteria in signal constellation
	design,'' Ph.D. dissertation, University of Toronto, Canada, Department of
	Electrical Engineering, Jun. 1992.
	
	\bibitem{khandani1993shaping}
	A.~K. Khandani and P.~Kabal, ``Shaping multidimensional signal spaces. {I.}
	optimum shaping, shell mapping,'' \emph{{IEEE} Trans. Inf. Theory}, vol.~39,
	no.~6, pp. 1799--1808, 1993.
	
	\bibitem{richardson_design_2018}
	T.~Richardson and S.~Kudekar, ``Design of {{Low}}-{{Density Parity Check
			Codes}} for {{5G New Radio}},'' \emph{{IEEE} Commun. Mag.}, vol.~56, no.~3,
	pp. 28--34, Mar. 2018.
	
	\bibitem{hou2014effective}
	J.~Hou and G.~Kramer, ``Effective secrecy: Reliability, confusion and
	stealth,'' in \emph{Proc. IEEE Int. Symp. Inf. Theory (ISIT)}.\hskip 1em plus
	0.5em minus 0.4em\relax IEEE, Sep. 2014, pp. 601--605.
	
	\bibitem{bocherer2011matching}
	G.~B\"ocherer and R.~Mathar, ``Matching dyadic distributions to channels,'' in
	\emph{Proc. Data Compression Conf.}, 2011, pp. 23--32.
	
	\bibitem{bocherer2014informational}
	G.~B\"ocherer and R.~A. Amjad, ``Informational divergence and entropy rate on
	rooted trees with probabilities,'' in \emph{Proc. IEEE Int. Symp. Inf. Theory
		(ISIT)}, Sep. 2014, pp. 176--180.
	
	\bibitem{amjad2013algorithms}
	R.~A. Amjad, ``Algorithms for simulation of discrete memoryless sources,''
	Master's thesis, Technical University of Munich, 2013.
	
	\bibitem{fischer2002precoding}
	R.~F.~H. Fischer, \emph{Precoding and Signal Shaping for Digital
		Transmission}.\hskip 1em plus 0.5em minus 0.4em\relax John Wiley \& Sons,
	Inc., 2002.
	
	\bibitem{fischer1999calculation}
	------, ``Calculation of shell frequency distributions obtained with
	shell-mapping schemes,'' \emph{{IEEE} Trans. Inf. Theory}, vol.~45, no.~5,
	pp. 1631--1639, 1999.
	
\end{thebibliography}
\end{document}